\newtheorem{lemma}{Lemma}
\newtheorem*{lemma*}{Lemma}
\newtheorem*{proposition*}{Proposition}
\newtheorem{theorem}{Theorem}
\newtheorem*{conjecture*}{Conjecture}
\newtheorem*{corollary*}{Corollary}
\theoremstyle{remark}
\newtheorem{example}{Example}
\newcommand{\ket}[1]{\vert #1 \rangle}
\newcommand{\bra}[1]{\langle #1 \vert}
\newcommand{\braket}[2]{\langle #1 \vert #2 \rangle}
\newcommand{\ketbra}[2]{\ket{#1}\bra{#2}}
\newcommand{\abs}[1]{\left| #1\right|}
\newcommand{\Tr}{\operatorname{Tr}}
\newcommand{\Span}{\operatorname{span}}
\newcommand{\hi}{\mathcal{H}} 
\newcommand{\tr}[1]{\mathrm{tr}\left[#1\right]} 
\newcommand{\M}{\mathcal{M}}
\renewcommand{\H}{\mathcal{H}}
\newcommand{\II}{\mathbb{I}}
\begin{document}
\title{On the pure steered states of Einstein-Podolsky-Rosen steering}

\author{H. Chau Nguyen}\email{chau@pks.mpg.de}
\affiliation{Max-Planck-Institut f\"{u}r Physik Komplexer Systeme,~N{\"{o}}thnitzer Stra{\ss}e 38, D-01187 Dresden, Germany}

\author{Kimmo Luoma}\email{kimmo.luoma@tu-dresden.de}
\affiliation{Institut f{\"u}r Theoretische Physik, Technische Universit{\"a}t Dresden, 
D-01062 Dresden, Germany}

\begin{abstract}
In the Einstein--Podolsky--Rosen experiment, when Alice makes a
measurement on her part of a bipartite system, Bob's part is collapsed
to, or steered to, a specific ensemble.  Moreover, by reading her
measurement outcome, Alice can specify which state in the ensemble
Bob's system is steered to and with which probability. The possible states that Alice can steer Bob's system to are called
steered states. In this work, we study the subset of steered states which are pure after normalisation. We illustrate that these
pure steered states, if they exist, often carry interesting
information about the shared bipartite state. 
This information content becomes particularly clear when we study the purification of the shared state. Some applications are discussed. These include a generalisation
of the fundamental lemma in the so-called `all-versus-nothing proof of
steerability' for systems of arbitrary dimension.
\end{abstract}

\pacs{03.65.Ud, 03.65.Ta, 03.67.Mn}

\maketitle

\subsection*{Introduction}

Einstein--Podolsky--Rosen (EPR) steering, or just steering, 
indicates the ability of Alice to collapse Bob's system to a specific
ensemble of states by measuring locally her own system, provided the
two systems are in an appropriate quantum state. This 
was first discussed in the seminal papers
of Einstein,~Podolsky and Rosen~\cite{einstein_can_1935} and 
Schr\"odinger~\cite{schrodinger_discussion_1935}. Recently, Wiseman~\textit{et al.} gave a precise operational definition of
steerability~\cite{wiseman_steering_2007,
jones_entanglement_2007}, which
distinguishes it from other classes of nonlocality in quantum
mechanics such as nonseparability~\cite{werner_quantum_1989} and
Bell nonlocality~\cite{bell_einstein-podolsky-rosen_1964}.

EPR steering can be seen also as a 
two-party quantum information processing task where 
Alice tries to convince Bob that the state they 
share is entangled. Bob does not trust Alice but he trusts on 
quantum mechanics and his own measurements~\cite{jones_entanglement_2007}. 
This view has motivated a lot of research on how to witness steerability in a specific experimental setting
by inequalities~\cite{cavalcanti_experimental_2009,
cavalcanti_unified_2011,schneeloch_einstein-podolsky-rosen_2013,costa_quantification_2016,zhu_universal_2016,saunders_experimental_2010,schneeloch_violation_2013}. Beyond sufficient inequalities, a sufficient and
necessary condition for steerability in terms of quantum channels has
been developed~\cite{piani_channel_2015,piani_necessary_2015}. It has also been shown that the steerability is related
to the concept of joint measurability of quantum observables 
\cite{uola_joint_2014,quintino_joint_2014,uola_one--one_2015,
uola_adaptive_2016}.

On the other hand, some {authors} concentrate on characterising
the steering process in EPR experiments independently from a specific
measurement setting. In this line, the EPR steering is characterised
by a positive linear map, which maps a measurement outcome operator, or \emph{effect}, on Alice's
system to a conditional state on Bob's
system, which is normalised to a steered state~\cite{nguyen2016non}. 
Then an analysis of an EPR experiment is based 
on all effects on Alice's system, and all
respective conditional states on Bob's
system~\cite{jevtic_quantum_2014,nguyen2016non,jevtic2015einstein,nguyen2016necessary}. The set of conditional states on Bob's system and its relative have been proved useful in characterising not only steerability but also more general aspects
of bipartite quantum states~\cite{jevtic_quantum_2014,nguyen2016non,milne2014quantum,milne2015geometric}.

In this work we concentrate on a special case of steered states: those that
are pure, or rank-$1$. In general, there may be no such pure
steered states. However, we show that if they exist, they often
carry interesting information about the shared bipartite state. The special role
of pure steered states becomes particularly clear when we study the
behavior of the purified system in the EPR experiment. The following
simple observation illustrates the idea. Suppose Charlie holds a third
system such that the joint system of his with Alice and Bob is in a
pure state. When Alice makes a measurement and obtains some outcome,
the {shared system of Bob and Charlie (BC)} is steered to some state, which is
nonseparable in general. However, if Alice learns that Bob's system is
in a pure state, she knows immediately the two {(BC)} are separable and
Charlie's system is also in a pure state. This is a clue that pure steered states imply special structure of the shared state, 
{which becomes visible in}
its purification. The {detailed structure of the 
purified state} is explained in the
following sections. Applications in proving steerability are also
discussed. Among these, a generalisation of the fundamental lemma of
the so-called `all-versus-nothing proof of
steerability'~\cite{chen_all-versus-nothing_2013} is proved in an
elementary way.

\subsection*{Effects and steered states} 
Consider the case where Alice and Bob share a state $\rho$ of a
bipartite quantum system over $\H_A \otimes \H_B$ with dimension $d_A
\times d_B$. Suppose Alice makes a positive operator valued measurement (POVM) described by effects $\{E_i\}_{i=1}^n$ on her system A, where the effects $E_i$
are positive semi-definite operators and $\sum_{i=1}^{n}E_i =
\II_A$.
Imagine that Alice can make all possible measurements. Then
the set of all possible effects is $\M_A = \{E
| 0 \le E \le \II_A\}$. Regardless of the POVM Alice makes, whenever she gets an effect $E\in \M_A$,  
Bob's system $B$ is \emph{steered} to a
conditional state $E'= \Tr_A [\rho (E \otimes \II_B)]$. The 
{set of} possible
conditional states Alice can steer Bob's system to is $\M_A'= \{E'| E \in
\M_A\}$~\cite{nguyen2016non,nguyen2016necessary}. A conditional state
$E'$ can also be characterised by {the}~\emph{steered state}
$\sigma = E'/\Tr(E')$ and {its}~\emph{steering probability}
$p=\Tr(E')$~\cite{jevtic_quantum_2014,nguyen2016non}. 

The set of steered states have been studied recently and a deep connection to the nonlocal properties of the shared bipartite state has been revealed~\cite{jevtic_quantum_2014,jevtic2015einstein,nguyen2016non,nguyen2016necessary}. In this work, we consider a special subset of steered states, the \emph{pure steered states}. For a general bipartite state $\rho$, there may exist no pure steered states. However when they do exist, they often carry interesting information about the shared state $\rho$.

\subsection*{Pure steered states and non-degenerate projective effects} 
To steer Bob's system to a specific state, we may imagine that Alice
tries to design measurements that are efficient in the sense that the
steering probability is maximal. We first note that to steer Bob's
system to a pure state, non-degenerate projective measurements are as efficient as general POVMs. 

Recall that a POVM $\{E_i\}_{i=1}^n$ is called a projection valued measurement (PVM), or \emph{projective measurement}, if the effects $E_i$ are mutual orthogonal projections. A projective measurement is \emph{non-degenerate} if all the orthogonal effects are rank-$1$. Effects of a non-degenerate projective measurement, which are of always the form $\ketbra{\alpha}{\alpha}$ for some vector $\ket{\alpha}$, are also called non-degenerate projective effects. 
\begin{lemma}
If Alice can steer Bob's system to a pure state, she can always gain the maximal steering probability for that steered state with a non-degenerate projective measurement.
\label{lem:pure_efficiency}
\end{lemma}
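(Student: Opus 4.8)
The plan is to reduce the comparison between general POVMs and non-degenerate projective measurements to a statement about a single effect. Concretely, suppose some effect $E\in\M_A$ steers Bob to the pure state $\sigma=\ketbra{\psi}{\psi}$, i.e. $E'=\Tr_A[\rho(E\otimes\II_B)]=p\,\ketbra{\psi}{\psi}$ with steering probability $p=\Tr(E')>0$. I would show that one can always find a non-degenerate projective measurement whose outcomes steering Bob to $\ket{\psi}$ carry a total probability at least $p$; since such measurements are themselves POVMs, this proves that they are exactly as efficient, and a compactness argument over the effect set then guarantees that the maximal steering probability is attained. The essential point, easy to overlook, is that the optimal projective strategy may have to collect \emph{several} outcomes, all steering to the same $\ket{\psi}$.

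The key step is a rank-reduction for the spectral components of $E$. Writing $E=\sum_k\lambda_k\ketbra{a_k}{a_k}$ with $\{\ket{a_k}\}$ orthonormal and $0<\lambda_k\le 1$, linearity of $E\mapsto E'$ gives $E'=\sum_k\lambda_k\,\omega_k$, where $\omega_k:=\Tr_A[\rho(\ketbra{a_k}{a_k}\otimes\II_B)]\ge 0$ is the unnormalised conditional state produced by the rank-$1$ projective effect $\ketbra{a_k}{a_k}$. Since $E'=p\,\ketbra{\psi}{\psi}$ has rank one and each $\lambda_k\omega_k$ is positive semi-definite, every summand must be supported on $\ket{\psi}$: testing against any $\ket{\chi}\perp\ket{\psi}$ makes the nonnegative quantities $\lambda_k\langle\chi|\omega_k|\chi\rangle$ sum to zero, forcing $\omega_k\ket{\chi}=0$. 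Hence each $\omega_k=q_k\,\ketbra{\psi}{\psi}$, i.e. every projective effect $\ketbra{a_k}{a_k}$ steers Bob to the very same pure state $\ket{\psi}$, with probability $q_k=\Tr(\omega_k)$.

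With this in hand the efficiency estimate is immediate: from $\lambda_k\le 1$ one gets $p=\sum_k\lambda_k q_k\le\sum_k q_k=\Tr(\Pi')$, where $\Pi=\sum_k\ketbra{a_k}{a_k}$ is the projection onto the support of $E$. Thus the projective effect $\Pi$ already steers Bob to $\ket{\psi}$ with probability at least $p$, and completing $\{\ket{a_k}\}$ to an orthonormal basis of $\H_A$ exhibits a bona fide non-degenerate projective measurement whose $\ket{a_k}$-outcomes realise this. To pin down the actual maximum I would show that $W=\{\ket{\alpha}\in\H_A:\Tr_A[\rho(\ketbra{\alpha}{\alpha}\otimes\II_B)]\propto\ketbra{\psi}{\psi}\}$ is a linear subspace; the only nontrivial point is that the coherence term $\langle a|\rho|b\rangle$ between two pure-steering directions is again proportional to $\ketbra{\psi}{\psi}$, which follows from positivity of the $2\times 2$ operator block $(\langle a_i|\rho|a_j\rangle)_{i,j}$ (a Cauchy--Schwarz argument forcing its range into $\Span\{\ket{\psi}\}$). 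Then the effects steering to $\ket{\psi}$ are precisely those with $0\le E\le\Pi_W$ (with $\Pi_W$ the orthogonal projection onto $W$), and $\Tr(E')=\Tr(\rho_A E)$, $\rho_A=\Tr_B\rho\ge 0$, is maximised at $E=\Pi_W$, a projection that once more factorises into non-degenerate projective outcomes.

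I expect the main obstacle to be conceptual rather than computational: recognising that \emph{each} spectral component of $E$ individually steers to the same $\ket{\psi}$ (the rank-reduction step), and that the operative notion of maximal steering probability aggregates all outcomes steering to $\ket{\psi}$ rather than a single one. Both the rank reduction and the subspace property rest on the same elementary fact, that a sum of positive semi-definite operators equal to a rank-one operator must have every summand proportional to that rank-one operator, so once this is isolated the remaining steps are routine.
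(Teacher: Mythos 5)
Your proof is correct, and its core coincides with the paper's own: both rest on the spectral decomposition $E=\sum_k\lambda_k\ketbra{a_k}{a_k}$, the observation that every positive-semidefinite summand of the rank-one operator $E'$ must itself be proportional to $\ketbra{\psi}{\psi}$ (you obtain this by testing against vectors orthogonal to $\ket{\psi}$; the paper cites extremality of pure states --- the same fact), and the final step of \emph{accumulating} several rank-one projective outcomes that all steer to the same $\ket{\psi}$. The genuine difference is in how maximality is handled. The paper starts from an effect $E$ that is assumed to achieve the maximal probability $p$ and argues its nonzero-steering eigenvalues must equal $1$ (``otherwise we can always increase $\lambda_i$''), whereas you use $\lambda_k\le 1$ to show directly that the support projection of \emph{any} steering effect does at least as well; this is logically cleaner and does not presuppose that a maximizer exists. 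Your last step closes precisely that gap: proving that $W=\{\ket{\alpha}:(\ketbra{\alpha}{\alpha})'\propto\ketbra{\psi}{\psi}\}$ is a linear subspace (the coherence blocks $\bra{a}\rho\ket{b}$ are forced into $\Span\{\ket{\psi}\}$ by positivity of the compressed $2\times 2$ block operator --- a correct argument, since for a positive block matrix the range of an off-diagonal block lies in the range of the adjacent diagonal block) and that the steering-to-$\ket{\psi}$ effects are exactly $\{E: 0\le E\le \Pi_W\}$, so the supremum is attained at $E=\Pi_W$. What the paper's route buys is brevity; what yours buys is a self-contained proof that the maximal steering probability exists and an explicit description of the optimal strategy, namely any non-degenerate projective measurement containing an orthonormal basis of $W$ among its outcomes.
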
 
\begin{proof}
Suppose Alice can steer Bob's system to $\ketbra{\beta}{\beta}$ with maximal probability $p$, then there exists an effect $E \in \M_A$ such that $E' = p \ketbra{\beta}{\beta}$. Let us consider the spectral decomposition of $E$, $E= \sum_{i=1}^{n} \lambda_i \ketbra{\alpha_i}{\alpha_i}$, where we kept only $\lambda_i > 0$. Accordingly, $E'=\sum_{i=1}^{n} \lambda_i (\ketbra{\alpha_i}{\alpha_i})' = p \ketbra{\beta}{\beta}$. But since $\ketbra{\beta}{\beta}$ is a pure state, and pure states are extremal~\cite{heinosaari_mathematical_2011}, this is only possible if for all $i$, either $(\ketbra{\alpha_i}{\alpha_i})'=0$ or $(\ketbra{\alpha_i}{\alpha_i})' \propto  \ketbra{\beta}{\beta}$. Suppose for $1 \le i \le m$, $(\ketbra{\alpha_i}{\alpha_i})'  \propto  \ketbra{\beta}{\beta} $, and for $m \le i \le n$, $(\ketbra{\alpha_i}{\alpha_i})' = 0$, then $\sum_{i=1}^{m} \lambda_i (\ketbra{\alpha_i}{\alpha_i})'= p \ketbra{\beta}{\beta}$. This in fact implies that $\lambda_i=1$ for $1\le i \le m$, otherwise we can always increase $\lambda_i$ to increase $p$. Thus $\sum_{i=1}^{m} (\ketbra{\alpha_i}{\alpha_i})'= p \ketbra{\beta}{\beta}$. This means that for a non-degenerate projective measurement that contains all $\{\ketbra{\alpha_i}{\alpha_i}\}_{i=1}^{m}$, Alice is able to steer Bob's system to $\ketbra{\beta}{\beta}$ with probability $p$. Note that all effects $\{\ketbra{\alpha_i}{\alpha_i}\}_{i=1}^{m}$ are mapped to the same steered state $\ketbra{\beta}{\beta}$, and the probability is accumulated to $p$.
\end{proof} 

The above lemma implies that Alice can always obtain the maximal steering probability of a pure steered state by accumulating multiple non-degenerate projective effects.

\subsection*{Behaviour of the purified state in steering} 
Further insight can be gained by considering the purification of the joint state. Let $C$ be an ancillary system attached to $AB$ such that the whole system is in a pure state $\ket{\Psi}$ and $\rho=\Tr_C(\ketbra{\Psi}{\Psi})$. Since $\ket{\Psi}$ is pure, after Alice gets an outcome in a non-degenerate projective measurement on $A$, $BC$ is also in a pure state. If further $B$ is in a pure state, so must be $C$. This observation leads directly to the following lemma.

\begin{lemma}
If with a non-degenerate projective effect $\ketbra{\alpha}{\alpha}$, Alice steers Bob's system to a pure state $\ketbra{\beta}{\beta}$ with probability $p$, then $C$ is also collapsed to a pure state $\ketbra{\gamma}{\gamma}$. Accordingly, the purified state $\ket{\Psi}$ can be written as
\begin{equation}
\ket{\Psi}= c \ket{\alpha,\beta,\gamma} + \ket{\tilde{\Phi}},
\label{eq:one_pure}
\end{equation}
where $\abs{c}^2= p$ and the partial projection of $\ket{\tilde{\Phi}}$ on $\ket{\alpha}$ vanishes, i.e., $\braket{\alpha}{\tilde{\Phi}}=0$. The tilde indicates that $\ket{\tilde{\Phi}}$ is not normalised.
\label{lem:one_pure}
\end{lemma}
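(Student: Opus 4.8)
The plan is to exploit the purity of the global state $\ket{\Psi}$ together with the Schmidt decomposition across the $B$--$C$ cut. First I would split the $A$-register of $\ket{\Psi}$ into the ray $\ket{\alpha}$ and its orthogonal complement. Writing $\ket{\chi_{BC}} = \braket{\alpha}{\Psi}$ for the partial inner product on $A$ (a generally unnormalised vector in $\H_B\otimes\H_C$), this gives
\begin{equation}
\ket{\Psi} = \ket{\alpha}\otimes\ket{\chi_{BC}} + \ket{\tilde{\Phi}},
\end{equation}
where by construction the partial projection $\braket{\alpha}{\tilde{\Phi}}$ vanishes. This already fixes the form of the remainder term in Eq.~\eqref{eq:one_pure}; it remains only to show that $\ket{\chi_{BC}}$ factorises as $c\,\ket{\beta}\otimes\ket{\gamma}$.

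Next I would identify the conditional state that Alice's effect produces on $BC$. Applying $\ketbra{\alpha}{\alpha}$ on $A$ annihilates $\ket{\tilde{\Phi}}$, so the unnormalised post-measurement vector of the whole system is the product $\ket{\alpha}\otimes\ket{\chi_{BC}}$; tracing out $A$ leaves $BC$ in the pure conditional state $\ketbra{\chi_{BC}}{\chi_{BC}}$, with steering probability $p=\braket{\chi_{BC}}{\chi_{BC}}$. This is the concrete realisation of the observation preceding the lemma, namely that a rank-$1$ measurement on $A$ collapses $BC$ to a pure state.

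The decisive step is to promote purity on $BC$ to a product structure. Since $\ket{\chi_{BC}}$ is a pure bipartite vector, its Schmidt decomposition reads $\ket{\chi_{BC}}=\sum_k\sqrt{\lambda_k}\,\ket{b_k}\otimes\ket{c_k}$ with orthonormal Schmidt bases on $B$ and $C$. Tracing out $C$ gives the reduced state $\sum_k\lambda_k\ketbra{b_k}{b_k}$ on $B$, which by hypothesis is proportional to the pure state $\ketbra{\beta}{\beta}$. A pure single-party marginal forces exactly one nonzero Schmidt coefficient, so $\ket{\chi_{BC}}=c\,\ket{\beta}\otimes\ket{\gamma}$ for some unit vector $\ket{\gamma}$ on $C$ and some $c$ with $\abs{c}^2=\braket{\chi_{BC}}{\chi_{BC}}=p$. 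In particular $C$ is collapsed to the pure state $\ketbra{\gamma}{\gamma}$, and substituting back into the decomposition yields exactly Eq.~\eqref{eq:one_pure}.

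I expect the only genuinely structural point to be this last implication—that purity of the $B$-marginal forces $\ket{\chi_{BC}}$ to be a product vector—which is simply the rank-one case of the Schmidt decomposition and hence presents no real obstacle. Everything else is bookkeeping with partial inner products, together with the normalisation that reads off $\abs{c}^2=p$.
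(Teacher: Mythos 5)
Your proof is correct and follows essentially the same route as the paper's: decompose $\ket{\Psi}$ into the component along $\ket{\alpha}$ and its orthogonal remainder, observe that the conditional state of $BC$ is pure, and conclude that purity of the $B$-marginal forces the product form $c\,\ket{\beta}\ket{\gamma}$. The only difference is that you make the last step explicit via the Schmidt decomposition, whereas the paper simply asserts it (``we must have $\ket{\delta}=\ket{\beta}\ket{\gamma}$''), so your write-up is a slightly more detailed version of the same argument.
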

\begin{proof}
Suppose Alice makes a non-degenerate projective measurement and gets $\ketbra{\alpha}{\alpha}$ with probability $p$, then the purified state can always be written as
\begin{equation}
\ket{\Psi}= c \ket{\alpha} \ket{\delta} + \ket{\tilde{\Phi}},
\end{equation}
where $\abs{c}^2= p$ and  $\braket{\alpha}{\tilde{\Phi}}=0$. Here $\ket{\delta}$ is a pure state of $BC$. However if $B$ is in pure state $\ket{\beta}$, we must have $\ket{\delta}=\ket{\beta}\ket{\gamma}$ and~\eqref{eq:one_pure} follows. 
\end{proof}

Using Lemma~\ref{lem:pure_efficiency}, one can easily extend Lemma~\ref{lem:one_pure} to the case where the pure conditional state $p \ketbra{\beta}{\beta}$ is obtained by accumulating several non-degenerate projective effects, $p \ketbra{\beta}{\beta} = \sum_{i=1}^{m} (\ketbra{\alpha_i}{\alpha_i})'$. In this case, the purified state can be written as 
\begin{equation}
\ket{\Psi}= \sum_{i=1}^{m} c_i \ket{\alpha_i,\beta,\gamma_i} + \ket{\tilde{\Phi}},
\end{equation}
where $\sum_{i=1}^{m} \abs{c_i}^2 = p$ and $\braket{\alpha_i}{\tilde{\Phi}}=0$ for all $i$. 

More interesting is the extension to the case of several different pure steered states. In this case, we need to consider the relationship between them. The following lemma is an obvious extension when several pure steered states are obtained in a single non-degenerate projective measurement.

\begin{lemma}
If with a non-degenerate projective measurement $\{\ketbra{\alpha_i}{\alpha_i}\}_{i=1}^{d_A}$, Alice can steer Bob's system to (not necessarily orthogonal) pure states $\{\ketbra{\beta_i}{\beta_i}\}_{i=1}^{d_A}$ with probabilities $\{p_i\}_{i=1}^{d_A}$, then $C$ is also collapsed to (not necessarily orthogonal) pure states $\{\ketbra{\gamma_i}{\gamma_i}\}_{i=1}^{d_A}$. Accordingly, the purified state $\ket{\Psi}$ can be written as
\begin{equation}
\ket{\Psi}= \sum_{i=1}^{d_A} c_i \ket{\alpha_i,\beta_i,\gamma_i}.
\label{eq:projective_connected_purified}
\end{equation}
with $\abs{c_i}^2=p_i$. 
\label{lem:projective_connected}
\end{lemma}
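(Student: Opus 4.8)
The plan is to exploit that a complete non-degenerate projective measurement pins down an orthonormal basis on Alice's side, along which the purification decomposes into well-defined conditional states of $BC$; the factorisation of each of these conditional states then follows outcome by outcome from Lemma~\ref{lem:one_pure}.

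First I would note that since $\{\ketbra{\alpha_i}{\alpha_i}\}_{i=1}^{d_A}$ is a non-degenerate projective measurement, its effects are mutually orthogonal rank-$1$ projections summing to $\II_A$, so the vectors $\{\ket{\alpha_i}\}_{i=1}^{d_A}$ constitute an orthonormal basis of $\H_A$. This permits the expansion $\ket{\Psi}=\sum_{i=1}^{d_A}\ket{\alpha_i}\otimes\ket{\tilde{\psi}_i}$, with $\ket{\tilde{\psi}_i}=(\bra{\alpha_i}\otimes\II_{BC})\ket{\Psi}$ the unnormalised conditional state of $BC$ upon outcome $i$. A short calculation confirms that the reduced state of $B$ obtained from $\ket{\tilde{\psi}_i}$ is exactly the steered state $(\ketbra{\alpha_i}{\alpha_i})'$, with norm $\langle\tilde{\psi}_i|\tilde{\psi}_i\rangle=p_i$.

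Next, for each fixed $i$ I would invoke the mechanism of Lemma~\ref{lem:one_pure}: because Alice's outcome steers $B$ to the pure state $\ketbra{\beta_i}{\beta_i}$, the marginal of $\ket{\tilde{\psi}_i}$ on $B$ is rank-$1$, and a pure marginal of a bipartite vector forces it to factorise. Hence $C$ collapses to a pure state $\ket{\gamma_i}$ and $\ket{\tilde{\psi}_i}=c_i\ket{\beta_i}\otimes\ket{\gamma_i}$ with $\abs{c_i}^2=p_i$. Substituting these factorised components back into the expansion gives $\ket{\Psi}=\sum_{i=1}^{d_A}c_i\ket{\alpha_i,\beta_i,\gamma_i}$, which is~\eqref{eq:projective_connected_purified}.

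The step carrying the genuine content is the completeness and orthonormality of $\{\ket{\alpha_i}\}$; once that is secured the decomposition of $\ket{\Psi}$ into orthogonal $A$-components is forced, and the remainder is a direct, term-by-term application of Lemma~\ref{lem:one_pure}. This is precisely why the present statement is only a mild strengthening of Lemma~\ref{lem:one_pure}: all the pure steered states here arise within a \emph{single} measurement, so the outcomes index mutually orthogonal pieces of $\ket{\Psi}$ and never have to be reconciled against one another---the genuine difficulty of combining incompatible measurement settings does not arise.
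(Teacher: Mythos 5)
Your proof is correct and is essentially the paper's own approach: the paper omits a written proof of this lemma, stating only that it is ``an easy extension of that of Lemma~\ref{lem:one_pure}'', and your argument---expanding $\ket{\Psi}$ along the orthonormal basis $\{\ket{\alpha_i}\}_{i=1}^{d_A}$ fixed by the complete non-degenerate projective measurement and then applying the pure-marginal factorisation mechanism of Lemma~\ref{lem:one_pure} outcome by outcome---is exactly that extension. Nothing is missing; your closing remark correctly identifies why the single-measurement hypothesis makes the lemma straightforward.
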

We ignore the proof, which is an easy extension of that of Lemma~\ref{lem:one_pure}.

\begin{example}
Consider the case where Alice and Bob share two qubits in state
\begin{align}
\rho&= \eta \ketbra{0,\beta_1}{0,\beta_1} + (1-\eta) \ketbra{1,\beta_2}{0,\beta_2} + \nonumber \\
& \quad \sqrt{\eta(1-\eta)} (z\ketbra{0,\beta_1}{1,\beta_2} + z^{\ast} \ketbra{1,\beta_2}{0,\beta_1}),
\label{eq:ex_2qbit}
\end{align}
where $0 \le \eta \le 1$, $\abs{z} \le 1$ and $\ket{\beta_1}$ and $\ket{\beta_2}$ are two arbitrary states. Note that for $z=0$ the state $\rho$ is separable and for $\eta= 1$ or $\eta=0$ it is pure. The state can be purified by attaching a third qubit. The purified state is
\begin{equation}
\ket{\Psi}= \sqrt{\eta} \ket{0,\beta_1,\gamma_1} + \sqrt{1-\eta} \ket{1,\beta_2,\gamma_2}, 
\label{eq:ex_2qbit_purified}
\end{equation}
with $\braket{\gamma_2}{\gamma_1}=z$. This is of the form~\eqref{eq:projective_connected_purified} and it is obvious that by measuring $\sigma_z$, Alice can steer Bob's system to pure states $\ket{\beta_1}$ or $\ket{\beta_2}$. 
\label{ex:2qubit}
\end{example}

The form~\eqref{eq:projective_connected_purified} of the purified
state turns out to be very informative. Later, using {Lemma~\ref{lem:projective_connected}}, we
will generalize the fundamental lemma of the all-versus-nothing proof
of steerability, originally designed only for two
qubits~\cite{chen_all-versus-nothing_2013}, to systems of arbitrary
dimension (Theorem~\ref{pros:hidden-state-model}). Before doing so, we
consider some other natural questions regarding the general properties
of the pure steered states.

The assumption in Lemma~\ref{lem:projective_connected} requires that the steered states are obtained in a single non-degenerate projective measurement. We ask the question if looking only at the steered states, is it possible to establish that several pure steered states are generated from a single non-degenerate projective measurement. We show that this is possible, albeit with rather restrictive assumptions.

\begin{lemma}
Consider a set of pure steered states $\{\ketbra{\beta_i}{\beta_i}\}_{i=1}^{n}$, each obtained from a single non-degenerate projective effect. In addition, if the states are orthogonal and have steering probabilities $p_i$ summed up to $1$, $ \sum_{i=1}^np_i=1$, then they can be obtained from a non-degenerate projective measurement.
\label{lem:orthogonal_complete}
\end{lemma}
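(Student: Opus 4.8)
The plan is to reduce the assertion to a single statement about Alice's reduced state $\rho_A:=\Tr_{B}\rho$ and then extract orthonormality from an equality case. First I would purify as in the previous section, $\rho=\Tr_{C}\ketbra{\Psi}{\Psi}$. Each steered state $\ketbra{\beta_i}{\beta_i}$ comes from a single non-degenerate projective effect $\ketbra{\alpha_i}{\alpha_i}$, so Lemma~\ref{lem:one_pure} gives the partial projection $\ket{\chi_i}:=\braket{\alpha_i}{\Psi}=c_i\ket{\beta_i,\gamma_i}$ (a vector in $BC$) with $\abs{c_i}^2=p_i$. A short partial-trace computation yields the identity $\langle\alpha_i|\rho_A|\alpha_j\rangle=\langle\chi_j|\chi_i\rangle$; since the $\ket{\beta_i}$ are orthogonal this collapses to
\[
\langle\alpha_i|\rho_A|\alpha_j\rangle=p_i\,\delta_{ij},\qquad \sum_i p_i=1=\Tr\rho_A .
\]
Thus the whole problem becomes the following linear-algebra claim: if $\rho_A\ge 0$ with $\Tr\rho_A=1$ and the unit vectors $\ket{\alpha_i}$ satisfy the displayed relations, then the $\ket{\alpha_i}$ are orthonormal (indeed eigenvectors of $\rho_A$). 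The rest of the argument proves this.

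Let $S=\operatorname{supp}\rho_A$ and set $\ket{\hat u_i}:=\rho_A^{1/2}\ket{\alpha_i}/\sqrt{p_i}$. The relation above says exactly that the $\ket{\hat u_i}$ form an orthonormal set in $S$. Writing $\ket{\alpha_i}=\Pi_S\ket{\alpha_i}+\ket{n_i}$ with $\ket{n_i}\in\ker\rho_A$, one computes $\norm{\Pi_S\alpha_i}^2=p_i\langle\hat u_i|\rho_A^{-1}|\hat u_i\rangle$ (inverse taken on $S$), so the unit-norm constraint forces $p_i\langle\hat u_i|\rho_A^{-1}|\hat u_i\rangle\le 1$. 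Convexity of $t\mapsto 1/t$ (Jensen in the eigenbasis of $\rho_A$) gives $1/\langle\hat u_i|\rho_A^{-1}|\hat u_i\rangle\le\langle\hat u_i|\rho_A|\hat u_i\rangle$, whence $p_i\le\langle\hat u_i|\rho_A|\hat u_i\rangle$. Summing and writing $\Pi_V$ for the projector onto $V:=\Span\{\hat u_i\}\subseteq S$,
\[
1=\sum_i p_i\le\sum_i\langle\hat u_i|\rho_A|\hat u_i\rangle=\Tr[\rho_A\Pi_V]\le\Tr\rho_A=1 .
\]

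Since the two ends coincide, every inequality is saturated, and the equality conditions do all the work. Equality in $\Tr[\rho_A\Pi_V]=\Tr\rho_A$ forces $S\subseteq V$, hence $V=S$ and the $\ket{\hat u_i}$ are an orthonormal basis of $S$ (so $n=\dim S$). Equality $p_i=\langle\hat u_i|\rho_A|\hat u_i\rangle$ forces $\ket{n_i}=0$, i.e. $\ket{\alpha_i}\in S$, while equality in the Jensen step forces each $\ket{\hat u_i}$ to lie in a single eigenspace of $\rho_A$; together with $\ket{\alpha_i}\propto\rho_A^{-1/2}\ket{\hat u_i}$ this makes $\ket{\alpha_i}$ a unit eigenvector of $\rho_A$, so the $\ket{\alpha_i}$ are orthonormal. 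Finally, completing $\{\ket{\alpha_i}\}_{i=1}^{n}$ to an orthonormal basis of $\H_A$ defines a non-degenerate projective measurement; the added effects lie in $\ker\rho_A$ and therefore steer with probability $\langle\alpha_k|\rho_A|\alpha_k\rangle=0$, so this measurement reproduces exactly the pure steered states $\ketbra{\beta_i}{\beta_i}$ with probabilities $p_i$, as claimed. I expect the main obstacle to be the equality-case analysis of the Jensen step: turning the single scalar constraint $\sum_i p_i=\Tr\rho_A$ into the pointwise eigenvector conclusion, which is what ultimately yields orthonormality.
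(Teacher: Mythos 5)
Your proof is correct, and it takes a genuinely different route from the paper's. Both arguments begin identically: purify, and use Lemma~\ref{lem:one_pure} to write $\braket{\alpha_i}{\Psi}=c_i\ket{\beta_i,\gamma_i}$ with $\abs{c_i}^2=p_i$. From there the paper stays in the purified space: the vectors $\ket{\alpha_i,\beta_i,\gamma_i}$ are orthonormal (thanks to the orthogonality of the $\ket{\beta_i}$), the overlaps are $\braket{\alpha_i,\beta_i,\gamma_i}{\Psi}=c_i$, and $\sum_{i}\abs{c_i}^2=1=\braket{\Psi}{\Psi}$, so a Parseval-type argument shows the expansion $\ket{\Psi}=\sum_i c_i \ket{\alpha_i,\beta_i,\gamma_i}$ is exhaustive; substituting this back into $\braket{\alpha_i}{\tilde{\Phi}_i}=0$ gives $\sum_{j\ne i} c_j \braket{\alpha_i}{\alpha_j}\ket{\beta_j,\gamma_j}=0$, and orthogonality of the $\ket{\beta_j,\gamma_j}$ forces $\braket{\alpha_i}{\alpha_j}=0$ directly. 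You instead compress the hypothesis into Alice's marginal, $\bra{\alpha_i}\rho_A\ket{\alpha_j}=p_i\delta_{ij}$ together with $\sum_i p_i=\Tr\rho_A$, and recover orthonormality from the equality cases of the chain $p_i \le 1/\bra{\hat u_i}\rho_A^{-1}\ket{\hat u_i} \le \bra{\hat u_i}\rho_A\ket{\hat u_i}$ summed against $\Tr[\rho_A\Pi_V]\le\Tr\rho_A$; this saturation analysis is sound (the per-$i$ equalities do follow from the summed chain because each term is separately dominated, and equality in Jensen for the strictly convex $t\mapsto 1/t$ does confine each $\ket{\hat u_i}$ to a single eigenspace of $\rho_A$). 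The paper's route buys brevity and a reusable by-product: the explicit form~\eqref{eq:particular-form} of $\ket{\Psi}$, which is exactly what is invoked later in the proof of Theorem~\ref{pros:pure-state}. Your route buys a sharper, self-contained linear-algebra fact: the effects $\ket{\alpha_i}$ are not merely orthogonal but are eigenvectors of $\rho_A$ with eigenvalues $p_i$, i.e., the steering effects are forced to be spectral projections of Alice's reduced state. One small point to add: as the paper does, state at the outset that the $p_i$ may be assumed strictly positive (zero-probability outcomes can always be appended to a measurement), since your construction of $\ket{\hat u_i}$ divides by $\sqrt{p_i}$.
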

Note that the assumption of this lemma is rather restrictive; in fact, pure steered states that are obtained 
by accumulating several non-degenerate projective effects are excluded from consideration.
\begin{proof}
Suppose with non-degenerate projective effects $\{\ketbra{\alpha_i}{\alpha_i}\}_{i=1}^{n}$, Alice can steer Bob's system to orthogonal states $\{\ketbra{\beta_i}{\beta_i}\}_{i=1}^{n}$ with probabilities $\{p_i\}_{i=1}^{n}$, $\sum_{i=1}^n p_i=1$. We can assume that $p_i$ are non-zero, since outcomes with empty probability can always be added to any measurement. We are going to show that $\{\ketbra{\alpha_i}{\alpha_i}\}_{i=1}^{n}$ are orthogonal. This would imply that we can extend $\{\ketbra{\alpha_i}{\alpha_i}\}_{i=1}^{n}$ to form a non-degenerate projective measurement.

According to Lemma~\ref{lem:one_pure}, for each effect $\ketbra{\alpha_{i}}{\alpha_{i}}$  we can write 
\begin{equation}
\ket{\Psi} = c_{i} \ket{\alpha_{i},\beta_i,\gamma_{i}} + \ket{\tilde{\Phi}_{i}}, 
\label{eq:for_each_i}
\end{equation}
where $\braket{\alpha_{i}}{\tilde{\Phi}_{i}}=0$ with $\abs{c_i}^2=p_i$.  Note that in the Hilbert space of $ABC$, the set $\{\ket{\alpha_{i},\beta_i,\gamma_i}\}_{i=1}^{n}$ are orthogonal because of the orthogonality of $\{\ket{\beta_i}\}_{i=1}^n$. Moreover $\sum_{i=1}^{n} \abs{c_i}^2= 1$. It then follows that 
\begin{equation}
\ket{\Psi} = \sum_{i=1}^{n} c_{i} \ket{\alpha_{i},\beta_i,\gamma_{i}},
\end{equation}
{by exhaustive expansion of $\ket{\Psi}$ with respect to
orthonormal states $\{\ket{\alpha_i,\beta_i,\gamma_i}\}_{i=1}^n$.}
Comparing with~\eqref{eq:for_each_i}, we get 
\begin{equation}
\ket{\tilde{\Phi}_i} = \sum_{j=1,j \ne i}^{n} c_{j} \ket{\alpha_{j},\beta_j,\gamma_{j}}.
\end{equation}
Since $\braket{\alpha_{i}}{\tilde{\Phi}_i}=0$, we have
\begin{equation}
0= \sum_{j=1,j \ne i}^{n} c_j \braket{\alpha_i}{\alpha_j} \ket{\beta_j,\gamma_j}.
\end{equation}
But since $\{\ket{\beta_j,\gamma_j} \}_{j=1,j \ne i}^{n}$ are orthogonal and $c_j \ne 0$, we must have $\braket{\alpha_i}{\alpha_j}=0$ for $j \ne i$, i.e., $\{\ketbra{\alpha_i}{\alpha_i}\}_{i=1}^{n}$ are orthogonal.
\end{proof}

\begin{example}
One can consider the state~\eqref{eq:ex_2qbit} over two qubits, now with $\ket{\beta_1}=\ket{0}$ and $\ket{\beta_2}=\ket{1}$. The two conditional states $(\ketbra{0}{0})'= \eta\ketbra{0}{0}$ and $(\ketbra{1}{1})'=(1-\eta) \ketbra{1}{1}$ are each obtained from a single non-degenerate projective effect. Moreover, they are orthogonal and have probabilities summed up to $1$. Therefore, they must be obtainable from a single non-degenerate projective measurement on $A$, which is $\sigma_z$ in this case. It is also clear from this example that the assumption in Lemma~\ref{lem:orthogonal_complete}, which requires $\braket{\beta_1}{\beta_2}=0$, is more restrictive than Lemma~\ref{lem:projective_connected}.
\end{example}

\subsection*{The emergence of subspaces of pure steered states}
Suppose Alice and Bob share two qubits which are in a pure nonseparable state. 
When Alice performs a local non-degenerate projective measurement on her system, Bob's system is steered to a pure state. 
With all possible non-degenerate projective effects, Alice can steer Bob's system to all possible pure states with different probabilities. We see that Alice's pure steered states cover the whole Hilbert space of Bob's system. 

We will see that in general Alice's steered states may cover only a subspace of Bob's Hilbert space.
This notion of subspace of pure steered states also appears when the shared bipartite system is in a suitable mixed state. In fact, these mixed states can be viewed as `partial pure states'--states that look like pure ones in some restricted subspace. The following theorem allows us to characterise these particular mixed states via their behaviour in EPR experiments.  

\begin{theorem}
\label{pros:pure-state}
Suppose the set of steered states on Bob's side contains a subset of pure steered states $\{\ketbra{\beta_i}{\beta_i}\}_{i=1}^{n}$, each obtained from a single non-degenerate projective effect. In addition, suppose that these steered states are orthogonal and have steering probabilities $p_i$ summed up to $1$, $ \sum_{i=1}^np_i=1$. Now if Alice can steer Bob's system
to another pure state $\ket{\beta}$ by a non-degenerate projective effect, then she can also steer Bob's system to any state in $\Span(\{\ket{\beta_i} \vert \braket{\beta_i}{\beta} \ne 0\})$.
\end{theorem}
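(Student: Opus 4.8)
The plan is to first pin down the global form of the purification from the hypotheses, and then read off the extra freedom that the additional pure steered state $\ket{\beta}$ creates. By Lemma~\ref{lem:orthogonal_complete}, the assumptions on $\{\ketbra{\beta_i}{\beta_i}\}_{i=1}^{n}$ guarantee that these states are steered by a single non-degenerate projective measurement $\{\ketbra{\alpha_i}{\alpha_i}\}_{i=1}^{n}$ with $\{\ket{\alpha_i}\}$ orthonormal; its proof in fact already delivers
\begin{equation}
\ket{\Psi}=\sum_{i=1}^{n}c_i\ket{\alpha_i,\beta_i,\gamma_i},\qquad \abs{c_i}^2=p_i,
\end{equation}
and since $\sum_i p_i=1$ this expression exhausts the norm of $\ket{\Psi}$, leaving no residual term.

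Next I would bring in the extra state $\ket{\beta}$, steered by some non-degenerate projective effect $\ketbra{\alpha}{\alpha}$ with probability $p=\abs{c}^2>0$. By Lemma~\ref{lem:one_pure} the ancilla collapses to a pure state $\ket{\gamma}$ and the partial projection of $\ket{\Psi}$ on $\ket{\alpha}$ equals $c\ket{\beta,\gamma}$. Projecting both expressions for $\ket{\Psi}$ onto $\ket{\alpha}$ on system $A$ and then onto $\ket{\beta_i}$ on system $B$ (using orthonormality of the $\ket{\beta_i}$) yields, for each $i$,
\begin{equation}
c\,\braket{\beta_i}{\beta}\,\ket{\gamma}=c_i\,\braket{\alpha}{\alpha_i}\,\ket{\gamma_i}.
\end{equation}
This is the crux: for every $i\in S:=\{i\vert\braket{\beta_i}{\beta}\ne 0\}$ the left-hand side is nonzero, forcing $c_i\ne 0$, $\braket{\alpha}{\alpha_i}\ne 0$, and $\ket{\gamma_i}\propto\ket{\gamma}$. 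Thus all ancilla states indexed by $S$ coincide with $\ket{\gamma}$ up to a phase, which I absorb into the $c_i$ so that $\ket{\gamma_i}=\ket{\gamma}$ for $i\in S$.

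The resulting factored form
\begin{equation}
\ket{\Psi}=\Bigl(\sum_{i\in S}c_i\ket{\alpha_i,\beta_i}\Bigr)\ket{\gamma}+\sum_{i\notin S}c_i\ket{\alpha_i,\beta_i,\gamma_i}
\end{equation}
makes the target reachable. Given any $\ket{\beta'}=\sum_{i\in S}d_i\ket{\beta_i}$, I would take $\ket{\phi}=\sum_{i\in S}t_i^{\ast}\ket{\alpha_i}$ (normalised) with $t_i\propto d_i/c_i$, well defined since $c_i\ne 0$ on $S$. Because $\{\ket{\alpha_i}\}$ are orthonormal, $\braket{\phi}{\alpha_i}=0$ for $i\notin S$, so projecting $A$ onto $\ket{\phi}$ annihilates the second sum and leaves the product state $\bigl(\sum_{i\in S}c_i t_i\ket{\beta_i}\bigr)\ket{\gamma}\propto\ket{\beta'}\ket{\gamma}$. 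Hence the non-degenerate projective effect $\ketbra{\phi}{\phi}$ steers Bob to $\ket{\beta'}$, which ranges over all of $\Span(\{\ket{\beta_i}\vert\braket{\beta_i}{\beta}\ne 0\})$.

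The main obstacle is the middle step: showing that the ancilla vectors $\ket{\gamma_i}$ for $i\in S$ all collapse onto the single direction $\ket{\gamma}$. Everything afterwards is linear algebra on the now-factorised purification, but this collapse is exactly what ties the new steered state $\ket{\beta}$ to the relevant $\ket{\beta_i}$ and keeps the conditional states on $B$ pure.
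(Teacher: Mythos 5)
Your proof is correct and follows the same overall strategy as the paper's: purify, invoke Lemma~\ref{lem:orthogonal_complete} to get $\ket{\Psi}=\sum_i c_i\ket{\alpha_i,\beta_i,\gamma_i}$, invoke Lemma~\ref{lem:one_pure} to write $\braket{\alpha}{\Psi}=c\ket{\beta,\gamma}$, deduce that the ancilla vectors $\ket{\gamma_i}$ coincide with $\ket{\gamma}$ on the index set $S$, and then steer within $\Span(\{\ket{\beta_i}\}_{i\in S})$ by projecting $A$ onto superpositions of the corresponding $\ket{\alpha_i}$. The one genuine difference is the middle step: the paper derives the scalar relation $\braket{\beta_i}{\beta}=(c_i/c)\braket{\alpha}{\alpha_i}\braket{\gamma}{\gamma_i}$ and then appeals to Lemma~\ref{lem:separable-plane} (extremality of pure states under partial trace) to force $\ket{\gamma_i}=\ket{\gamma}$, whereas you project the identity $c\ket{\beta,\gamma}=\sum_i c_i\braket{\alpha}{\alpha_i}\ket{\beta_i,\gamma_i}$ directly onto $\bra{\beta_i}$, obtaining the vector equation $c\braket{\beta_i}{\beta}\ket{\gamma}=c_i\braket{\alpha}{\alpha_i}\ket{\gamma_i}$, which immediately gives the proportionality. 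This bypasses Lemma~\ref{lem:separable-plane} entirely, and it also settles cleanly a point the paper leaves implicit: that the coefficients $c_i\braket{\alpha}{\alpha_i}$ vanish for $i\notin S$, which is needed before Lemma~\ref{lem:separable-plane} (whose hypothesis requires all coefficients in the combination to be nonzero) can legitimately be applied to the surviving terms. Your explicit phase absorption into the $c_i$ and the explicit choice $t_i\propto d_i/c_i$ to reach a prescribed target $\ket{\beta'}$ are likewise more careful renderings of the paper's closing remark that ``$c_i\ne 0$ and $a_i$ are arbitrary,'' but they are mathematically equivalent to it.
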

Note that here and from now on we also use vectors such as $\ket{\beta}$ on behalf of their projections  $\ketbra{\beta}{\beta}$ to indicate pure states.


In order to prove this theorem, we need the following simple lemma.

\begin{lemma}
\label{lem:separable-plane}
Let $\{\ket{\alpha_i}\}_{i=1}^{n}$ be a set of
$n \le d_A$ orthonormal states of system $A$, and $\{\ket{\beta_i}\}_{i=1}^{n}$ be arbitrary states of system $B$. If there {exist} $n$ non-zero
numbers $a_i$ such that the linear combination $\sum_{i=1}^{n} a_i
\ket{\alpha_i} \ket{\beta_i}$ is a product state, then
$\ket{\beta_1}=\ket{\beta_2}=\cdots=\ket{\beta_n}$.
\end{lemma}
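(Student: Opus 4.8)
The plan is to extract each $\ket{\beta_j}$ individually from the sum by exploiting the orthonormality of the $\ket{\alpha_i}$, and then to read off the product structure. Write the hypothesis as
\begin{equation}
\ket{\Psi} = \sum_{i=1}^{n} a_i \ket{\alpha_i}\ket{\beta_i} = \ket{\mu}\ket{\nu}
\end{equation}
for some $\ket{\mu} \in \H_A$ and some normalised $\ket{\nu} \in \H_B$. The single tool I would use is the partial inner product $\braket{\alpha_j}{\cdot}$ on system $A$, which maps a vector of $AB$ to a vector of $B$.

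First I would apply $\braket{\alpha_j}{\cdot}$ to the left-hand side: since $\braket{\alpha_j}{\alpha_i}=\delta_{ij}$, every term but one drops out and we are left with $a_j\ket{\beta_j}$. Applying the same partial projection to the right-hand side gives $\braket{\alpha_j}{\mu}\ket{\nu}$. Equating the two yields
\begin{equation}
a_j\ket{\beta_j} = \braket{\alpha_j}{\mu}\,\ket{\nu}
\qquad\text{for every } j .
\end{equation}
Because $a_j\neq 0$ and $\ket{\beta_j}$ is a nonzero (normalised) state, the left-hand side does not vanish, so $\braket{\alpha_j}{\mu}\neq 0$ and $\ket{\beta_j} = (\braket{\alpha_j}{\mu}/a_j)\,\ket{\nu}$. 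Thus every $\ket{\beta_j}$ is proportional to the single fixed vector $\ket{\nu}$, and after normalisation they all coincide as states, which is the claim.

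In truth there is no hard step here: the essence of the argument is the observation that the partial projections of a product state onto an arbitrary family of $A$-vectors are all proportional to one and the same $B$-vector, namely $\ket{\nu}$. The only point deserving a moment's care is the nonvanishing of the scalars $\braket{\alpha_j}{\mu}$, which guarantees that each $\ket{\beta_j}$ is genuinely pinned down rather than left undetermined; this is precisely where the hypotheses $a_j\neq 0$ and $\ket{\beta_j}\neq 0$ enter. I would also note that if the $\ket{\beta_i}$ are not assumed normalised, the conclusion should be read projectively, i.e. $\ketbra{\beta_1}{\beta_1}=\cdots=\ketbra{\beta_n}{\beta_n}$, consistent with the paper's convention of using a vector to denote the corresponding pure state.
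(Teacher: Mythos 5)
Your proof is correct, but it takes a genuinely different route from the paper's. The paper proves this lemma by tracing over system $A$: since the $\ket{\alpha_i}$ are orthonormal, the reduced state of $B$ becomes $\ketbra{\nu}{\nu}=\sum_{i=1}^{n}\abs{a_i}^2\ketbra{\beta_i}{\beta_i}$, and the conclusion then follows from the extremality of pure states in the convex set of density operators (the same fact the paper invokes in Lemma~1 and Lemma~5). You instead stay at the level of vectors: the partial inner product $\braket{\alpha_j}{\cdot}$ kills all terms but one on the left and yields $a_j\ket{\beta_j}=\braket{\alpha_j}{\mu}\ket{\nu}$, pinning each $\ket{\beta_j}$ to the single vector $\ket{\nu}$. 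Your version is more elementary --- it needs no convexity or extremality input, only orthonormality and linearity --- and it delivers slightly more, namely the explicit proportionality coefficients $\braket{\alpha_j}{\mu}/a_j$; it is also the same partial-projection technique the paper itself deploys later in the proof of Theorem~\ref{pros:pure-state}, so it fits the paper's toolkit naturally. What the paper's route buys is brevity given the extremality fact, and a statement that manifestly lives at the level of density operators, where ``equality of states'' needs no projective caveat; your closing remark that the conclusion should be read as $\ketbra{\beta_1}{\beta_1}=\cdots=\ketbra{\beta_n}{\beta_n}$ correctly supplies that caveat for the vector-level formulation.
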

\begin{proof}
Suppose $\sum_{i=1}^{n} a_i \ket{\alpha_i}
\ket{\beta_i}$ is a product state, i.e., $\sum_{i=1}^{n} a_i \ket{\alpha_i}
\ket{\beta_i}=\ket{\alpha} \ket{\beta}$. Then when
tracing over A, one gets
\begin{equation}
\ketbra{\beta}{\beta}=\sum_{i=1}^{n} a_i a_i^{\ast} \ketbra{\beta_i}{\beta_i},
\end{equation}
where we have used the fact that $\{\ket{\alpha_i}\}_{i=1}^{n}$ are
orthogonal. Since pure states are extremal states~\cite{heinosaari_mathematical_2011}, this only happens for non-zero $a_i$ if $\ket{\beta_1}=\ket{\beta_2}=\cdots=\ket{\beta_n}$.
\end{proof}

\begin{proof}[Proof of theorem~\ref{pros:pure-state}]
Let us attach a system $C$ to purify $AB$ to a state $\ket{\Psi}$. 
Because of lemma~\ref{lem:orthogonal_complete}, the purified state takes the form
\begin{equation}
\ket{\Psi}= \sum_{i=1}^{n} c_{i} \ket{\alpha_{i},\beta_{i},\gamma_{i}},
\label{eq:particular-form}
\end{equation}
where $\{\ket{\alpha_{i}}\}_{i=1}^{n}$ are orthogonal. Now if Alice can
steer $B$ to another state $\ket{\beta}$ 
by a non-degenerate projective effect, by
Lemma~\ref{lem:one_pure}, there exists state 
$\ket{\alpha}$ in $\hi_A$
and state $\ket{\gamma}$ in $\hi_C$ such that
$\braket{\alpha}{\Psi}=c \ket{\beta}\ket{\gamma}$ for some $c \ne
0$. Using (\ref{eq:particular-form}), one has 
\begin{equation}
\ket{\beta,\gamma}=\sum_{i=1}^{n} \frac{c_i \braket{\alpha}{\alpha_{i}}}{c}
\ket{\beta_{i},\gamma_{i}}.
\label{eq:consequence}
\end{equation} 
Since $\{\ket{\beta_i}\}_{i=1}^{n}$ are orthogonal by assumption, one finds
$\braket{\beta_{i}}{\beta}=\frac{c_{i}}{c}
\braket{\alpha}{\alpha_i} \braket{\gamma}{\gamma_i}$. For any
$i$ such that $\braket{\beta_{i}}{\beta} \ne 0$, one has
$\frac{c_{i}}{c} \braket{\alpha}{\alpha_{i}} \ne 0$. Thus by
Lemma~\ref{lem:separable-plane},~\eqref{eq:consequence} implies that for all $\ket{\beta_{i}}$ such that
$\braket{\beta_{i}}{\beta} \ne 0$,
$\ket{\gamma_{i}}=\ket{\gamma}$. 
Without loss of generality, we assume that  $\braket{\beta_i}{\beta} \ne 0$ for $1 \le i \le m$ and
$\braket{\beta_i}{\beta} = 0$ for $m <i \le n$. The purified
state $\ket{\Psi}$ can now be written as
\begin{align}
\ket{\Psi}&= \left( \sum_{i=1}^{m} c_i \ket{\alpha_{i},\beta_{i}} \right) \ket{\gamma} + \sum_{i=m+1}^{n} c_{i} \ket{\alpha_{i},\beta_{i},\gamma_{i}}.
\label{eq:k-decomposition}
\end{align}
Note that $\{\ket{\alpha_i}\}_{i=1}^{m}$ are orthogonal due to
Lemma~\ref{lem:orthogonal_complete}.
We claim that with this form, the bipartite
state allows Alice to steer $B$ to any state in $\Span
(\{\ket{\beta_i}\}_{i=1}^{m})$. To see that we just calculate the (unnormalised) state of Bob's system given that Alice gets state $\ket{\alpha}=\sum_{i=1}^{m} a_i \ket{\alpha_i}$ in a non-degenerate projective measurement,
\begin{align}
\braket{\alpha}{\Psi}= \left( \sum_{i=1}^{m} a_i^{\ast} c_i \ket{\beta_i} \right) \ket{\gamma},
\label{eq:k-decomposition-rotated}
\end{align}
where the separated ancillary state $\ket{\gamma}$ can be simply ignored. 
Since $c_i \ne 0$ and $a_i$ are arbitrary, the conditional states~\eqref{eq:k-decomposition-rotated} cover the whole $\Span
(\{\ket{\beta_i}\}_{i=1}^{m})$.
\end{proof}

\begin{example}
When applied to a two-qubit system, this theorem is particularly simple. For a two-qubit system, suppose by non-degenerate projective measurements, Alice can steer Bob's system to two orthogonal pure states with probabilities summed up to
$1$, and another additional pure state, then $AB$ must be in a pure state. This can be an operational way for Alice to prove to Bob that their shared state is a pure entangled one (instead of steering Bob's system to infinitely many pure states). 
\end{example}

\begin{example}
Consider the case where Alice and Bob share two qutrits in state
\begin{align}
\rho=& \eta \ketbra{\psi^{-}}{\psi^{-}} + (1-\eta) \ketbra{0,0}{0,0} +  \nonumber \\ & \sqrt{\eta (1-\eta)} (z \ketbra{\psi^{-}}{0,0}+ z^{\ast} \ketbra{0,0}{\psi^{-}}), 
\label{eq:qutrit-pair}
\end{align}
with $\ket{\psi^{-}}= \frac{1}{\sqrt{2}} (\ket{+1,-1} - \ket{-1,+1})$, $0 < \eta < 1$ and $\abs{z} \le 1$. The state can be purified by attaching a qubit. The purified state is 
\begin{equation}
\ket{\Psi} = \sqrt{\eta} \ket{\psi^{-}} \ket{\gamma_1} + \sqrt{1-\eta} \ket{0,0}\ket{\gamma_2},
\end{equation} 
with $\braket{\gamma_2}{\gamma_1}=z$, which is clearly of the form~\eqref{eq:k-decomposition}. It is then easy to see that Alice can steer Bob system to any linear combination of $\ket{-1}$ and $\ket{+1}$.
\label{ex:qutrit-pair}
\end{example}

\subsection*{Pure steered states and steerability} 
In their paper, Wiseman~\emph{et al.}~\cite{wiseman_steering_2007} discovered that only for certain states the EPR steering experiment is verifiable to be truly nonlocal, while for other states the steering experiment can actually be locally simulated. The former are regarded as \emph{steerable} states, and the latter are called \emph{unsteerable} states. 

To demonstrate steering, Alice prepares a bipartite quantum state
$\rho$ over $\H_A \otimes \H_B$. She sends part $B$ to Bob and
specifies a set of measurements $A$ (called \emph{measurement
assemblage}) she can make. Bob then asks her to make a measurement
$A_x$ from $A$, which consists of effects $\{A_{a|x}\}_a$.  As we
described above, it is expected that upon Alice making the
measurement, Bob's system is steered to unnormalised conditional
states $A'_x= \{\Tr_A [\rho(A_{a|x}\otimes \II)] \}_a$, which form an
ensemble $\{P(a|x),\rho_{a|x}\}_a$ with $P(a|x)=\Tr(A'_{a|x})$ and
$\rho_{a|x}=A'_{a|x}/P(a|x)$. When receiving the outcome from Alice,
Bob can perform state tomography to verify his corresponding
conditional state. Being able to perform any specified measurement
$x$, Alice intends to convince Bob that she can steer his system to
different ensembles in the steering assemblage $A'$ from {a} distance.

However, when Bob does not trust Alice, then this verification protocol may not be always convincing. Indeed, there may exist a strategy for Alice to cheat Bob. In the cheating strategy, instead of sending Bob halves of entangled systems, Alice sends him random states choosing from an ensemble of \emph{Local Hidden States (LHS)} $\{P(\xi),\sigma_\xi\}_{\xi}$. Upon receiving the request of measuring $x$, she simulates the measurement outcomes by randomly choosing an outcome $a$ with a designed probability function $P(a|x,\xi)$. Such a simulation gives exactly the expected result of state tomography by Bob if
\begin{align}\label{eq:hidden-model}
  A'_{a|x}=\sum_{\xi}P(a|x,\xi)P(\xi)\sigma_\xi,
\end{align}
in which case the state is \emph{unsteerable} (here always considered from Alice{'s} side) with measurement assemblage $A$. 
It is clear that the existence of the LHS ensemble, which allows for this cheating strategy, depends crucially on the steering assemblage $A'$. The steering assemblage $A'$, in turn, depends on both the shared state $\rho$ and the measurement assemblage $A$. 
In the following, if the measurement assemblage is not explicitly specified, we assume that it consists of all non-degenerate projective measurements. 

Following the above consideration, we call $\mathcal{A}=\{A_{a|x}\}_{a,x}$ the set of effects for the assemblage $A$, which is a subset of $\M_A$. Accordingly, $\mathcal{A}'=\{A_{a|x}'\}_{a,x}$ is a subset of $\M_A'$, called the set of conditional states on Bob's system for the assemblage $A$. The problem of determining the existence of a LHS ensemble is also greatly simplified if $\mathcal{A}'$ contains one or several (unnormalised) pure states. At the heart of this
simplification is the following lemma.

\begin{lemma}
\label{lem:pure-hidden-model}
If Alice can steer Bob's system to a pure state 
$\ket{\beta}\bra{\beta}$ 
with probability $p$, then the LHS ensemble, if
exists, must contain 
$\ket{\beta}\bra{\beta}$ with probability no less than $p$.
\end{lemma}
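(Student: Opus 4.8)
The plan is to use that a pure state is an extreme point of the state space, so whenever a pure conditional state is reproduced by a local hidden state (LHS) ensemble, that ensemble is forced to place essentially all of the relevant weight on the pure state itself. First I would fix the data realizing the hypothesis: since the assemblage consists of non-degenerate projective measurements, the pure conditional state is realized by a non-degenerate projective effect $\ketbra{\alpha}{\alpha}$ (Lemma~\ref{lem:pure_efficiency} guarantees such an effect always suffices), occurring as the outcome $a_0$ of some measurement $x$, with $A'_{a_0|x}=p\ketbra{\beta}{\beta}$. Supposing an LHS ensemble $\{P(\xi),\sigma_\xi\}_\xi$ exists, I would then write down the simulation constraint~\eqref{eq:hidden-model} for this one outcome, $p\ketbra{\beta}{\beta}=\sum_\xi P(a_0|x,\xi)P(\xi)\,\sigma_\xi$, and abbreviate the non-negative weights by $q_\xi=P(a_0|x,\xi)P(\xi)$.

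The heart of the argument is the extremality step. Taking the trace of the constraint, and using $\Tr\sigma_\xi=1$, gives $\sum_\xi q_\xi=p$; dividing through by $p$ turns the constraint into a bona fide convex decomposition $\ketbra{\beta}{\beta}=\sum_\xi(q_\xi/p)\,\sigma_\xi$ of the pure, hence extremal, state $\ketbra{\beta}{\beta}$. Extremality then forces $\sigma_\xi=\ketbra{\beta}{\beta}$ for every $\xi$ with $q_\xi>0$: every hidden state contributing to this outcome coincides with $\ketbra{\beta}{\beta}$.

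It then remains to convert this into a bound on the total weight $\sum_{\xi:\,\sigma_\xi=\ketbra{\beta}{\beta}}P(\xi)$ that the ensemble assigns to $\ketbra{\beta}{\beta}$. Because each response probability satisfies $P(a_0|x,\xi)\le1$, I would estimate
\[
  p=\sum_\xi q_\xi=\sum_{\xi:\,q_\xi>0}P(a_0|x,\xi)P(\xi)\le\sum_{\xi:\,q_\xi>0}P(\xi)\le\sum_{\xi:\,\sigma_\xi=\ketbra{\beta}{\beta}}P(\xi),
\]
the final inequality holding since $\{\xi:q_\xi>0\}\subseteq\{\xi:\sigma_\xi=\ketbra{\beta}{\beta}\}$ by the previous step. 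This is exactly the claim that the LHS ensemble must contain $\ketbra{\beta}{\beta}$ with probability no less than $p$.

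I expect the only subtle point to be this last passage, from the contributing weight to the full weight on $\ketbra{\beta}{\beta}$. One must not assume that every $\xi$ with $\sigma_\xi=\ketbra{\beta}{\beta}$ actually responds with outcome $a_0$, and the slack $P(a_0|x,\xi)\le1$ is precisely what makes the conclusion an inequality rather than an equality. A minor variant to keep in mind is when $p\ketbra{\beta}{\beta}$ is obtained by accumulating several projective effects of a single measurement; this changes nothing, since the accumulated response weight $\sum_{a}P(a|x,\xi)$ still lies in $[0,1]$ and the same trace-and-extremality argument applies verbatim.
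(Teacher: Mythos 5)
Your proof is correct and follows essentially the same route as the paper's: extremality of pure states forces every hidden state contributing to the decomposition of $p\ketbra{\beta}{\beta}$ to equal $\ketbra{\beta}{\beta}$, and the bound $P(a_0|x,\xi)\le 1$ then gives total ensemble weight at least $p$. If anything, your version is slightly more careful: the paper tacitly speaks of a single contributing $\xi_0$, whereas you explicitly take the trace to normalise the decomposition and correctly allow several $\xi$ carrying the same state $\ketbra{\beta}{\beta}$.
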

\begin{proof}
  By assumption, in equation~(\ref{eq:hidden-model}), $A_{a|x}'=p\ket{\beta}\bra{\beta}$ for some $a$ and $x$. But since pure states are
  {extremal} points of the set of states of $B$~\cite{heinosaari_mathematical_2011}, 
  the decomposition in terms of states of the LHS 
  ensemble~(\ref{eq:hidden-model}) is only possible if
  $\sigma_{\xi_0}=\ket{\beta}\bra{\beta}$ for some $\xi_0$ and $P(a|x,\xi)$ all vanish except for $\xi = \xi_0$. Moreover $P(a|x,\xi_0) \le 1$, thus $P(\xi_0) \ge p$.
\end{proof}

Despite its simplicity, Lemma~\ref{lem:pure-hidden-model} is in fact
very useful. As a simple corollary, if Alice can steer Bob to some
pure states (not necessarily orthogonal) with probabilities summed up
greater than $1$, then they cannot be contained in any LHS ensemble and the state of $AB$ is steerable from Alice's
side. 

%
\begin{example} 
For a pure nonseparable state, Alice can steer Bob's system to infinitely many pure states by using different non-degenerate projective measurements. The total probability is infinite, therefore nonseparable pure states are all steerable. 
\end{example}

\begin{example} 
One can take the state~\eqref{eq:qutrit-pair} in example~\ref{ex:qutrit-pair} with $0 < \eta < 1$. Although the state is mixed, the subspace of pure steered state is infinite. Steerability is therefore also guaranteed. Although we have obtained the results with fairly elementary arguments, it is interesting to note that the steerability of the state is not easily detected by steering inequalities. 
For example, consider the steering inequality proposed in \cite{he_entanglement_2011},
\begin{align}\label{eq:2} \abs{\langle S^{s_A}\otimes
S^{s_B}\rangle}^2> &\langle
((S^x)^2+(S^y)^2-\frac{7}{16})\otimes((S^x)^2+(S^y)^2)\rangle,
\end{align} 
where $\langle K_{AB}\rangle =\tr{K_{AB}\rho_{AB}}$, $S^{s_{A/B}}$ with $s_A,s_B=\pm$  are the spin raising and lowering operators, and $S^i$ with $i=x,y,z$ are the usual self-adjoint
spin operators. In the inequality, $s_A$ and $s_B$ can be chosen independently to optimize
the possible violation. When the inequality is true, it witnesses the
steerability of the state.  For simplicity, we consider a simple case of~\eqref{eq:qutrit-pair} where $z=0$,
\begin{align}\label{eq:3} 
\rho=& \eta \ketbra{\psi^-}{\psi^-} +(1-\eta)\ketbra{00}{00}.
\end{align} 
For this state, the left hand-side of equation~(\ref{eq:2}) gives $0$ and the right hand-side gives
$\frac{1}{16}(50-41 \eta)$. Thus the inequality is not violated. Note also that this example can also be easily extended to systems of arbitrary dimension. 

\end{example}

%
%

Following this line of argument, when the set of conditional states $\mathcal{A}'$ contains a set of pure states with steering
probabilities summed up to $1$, the LHS ensemble, if exists, can only
be these pure states with their corresponding steering
probabilities. Determining the steerability of the state is therefore
reduced to checking if the LHS ensemble can explain all measurements in the assemblage $A$ in the sense of equation~(\ref{eq:hidden-model}). The following theorem utilises the idea.

\begin{theorem}
\label{pros:hidden-state-model}
If by a non-degenerate projective measurement Alice can steer Bob's system $B$ to a set of independent pure states, then the joint state of $AB$ is either
separable or steerable.
\end{theorem}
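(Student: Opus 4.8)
The plan is to prove the equivalent statement that if the shared state of $AB$ is \emph{unsteerable} then it is separable; the theorem then follows by contraposition. Throughout I read ``independent'' as linearly independent. First I would purify $AB$ with an ancilla $C$ and apply Lemma~\ref{lem:projective_connected} to the given non-degenerate projective measurement $\{\ketbra{\alpha_i}{\alpha_i}\}_{i=1}^{d_A}$, which steers $B$ to the pure states $\ket{\beta_i}$ with probabilities $p_i$. This writes the purification as
\begin{equation}
\ket{\Psi}=\sum_{i=1}^{d_A} c_i \ket{\alpha_i,\beta_i,\gamma_i},\qquad \abs{c_i}^2=p_i,
\end{equation}
with $\{\ket{\alpha_i}\}$ orthonormal. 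Since $\sum_i \ketbra{\alpha_i}{\alpha_i}=\II_A$, the steering probabilities automatically obey $\sum_i p_i=1$, so Lemma~\ref{lem:pure-hidden-model} applies to every $\ket{\beta_i}$.

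Next I would pin down the only admissible LHS ensemble. Assume the state is unsteerable, so some LHS ensemble exists. By Lemma~\ref{lem:pure-hidden-model} it must contain each $\ketbra{\beta_i}{\beta_i}$ with weight at least $p_i$; since the $\ket{\beta_i}$ are linearly independent they are distinct as rays, so these are $d_A$ separate demands whose weights already total $\sum_i p_i=1$. The ensemble is therefore forced to be exactly $\{(p_i,\ketbra{\beta_i}{\beta_i})\}_{i=1}^{d_A}$, leaving no room for any other state. Hence for \emph{every} effect of the assemblage the induced conditional state on $B$ must be a non-negative combination of the \emph{fixed} projectors $\{\ketbra{\beta_i}{\beta_i}\}$.

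The crux is a rigidity computation forcing the $\ket{\gamma_i}$ to be orthogonal. For a non-degenerate projective effect $\ketbra{\alpha}{\alpha}$ with $\ket{\alpha}=\sum_i a_i\ket{\alpha_i}$, the partial projection $\braket{\alpha}{\Psi}=\sum_i c_i a_i^{\ast}\ket{\beta_i,\gamma_i}$ is the unnormalised state of $BC$, and tracing out $C$ gives the conditional state
\begin{equation}
E'=\Tr_C\!\left[\braket{\alpha}{\Psi}\braket{\Psi}{\alpha}\right]=\sum_{i,j} c_i c_j^{\ast} a_i^{\ast} a_j\, \braket{\gamma_j}{\gamma_i}\,\ketbra{\beta_i}{\beta_j}.
\end{equation}
The constraint from the previous step is that $E'$ lie in $\Span\{\ketbra{\beta_i}{\beta_i}\}$. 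Here I would invoke that linear independence of $\{\ket{\beta_i}\}$ makes the operators $\{\ketbra{\beta_i}{\beta_j}\}_{i,j}$ linearly independent, so matching coefficients forces $c_i c_j^{\ast} a_i^{\ast} a_j\braket{\gamma_j}{\gamma_i}=0$ for $i\neq j$. Choosing a single $\ket{\alpha}$ with all $a_i\neq 0$ (any unit vector completes to a non-degenerate projective measurement, so this effect belongs to the assemblage) and using $c_i\neq 0$, I obtain $\braket{\gamma_j}{\gamma_i}=0$ for $i\neq j$, i.e. the $\ket{\gamma_i}$ are orthonormal. Substituting this back into $\rho=\Tr_C\ketbra{\Psi}{\Psi}$ annihilates all cross terms and yields
\begin{equation}
\rho=\sum_{i=1}^{d_A} p_i\,\ketbra{\alpha_i}{\alpha_i}\otimes\ketbra{\beta_i}{\beta_i},
\end{equation}
which is manifestly separable, as desired.

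I expect the rigidity step to be the main obstacle. The a priori LHS condition only constrains the conditional states of the chosen assemblage, and one must argue that this already forces every off-diagonal overlap $\braket{\gamma_j}{\gamma_i}$ to vanish. The two points to secure are that a single generic effect with all $a_i\neq 0$ is admissible and already eliminates all cross terms at once, and that the operators $\{\ketbra{\beta_i}{\beta_j}\}$ are linearly independent --- precisely where the ``independent'' hypothesis enters, and where the argument would fail if the $\ket{\beta_i}$ were merely distinct but linearly dependent.
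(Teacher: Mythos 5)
Your proof is correct and follows essentially the same route as the paper: purify via Lemma~\ref{lem:projective_connected}, force the LHS ensemble to be exactly $\{(p_i,\ketbra{\beta_i}{\beta_i})\}$ using Lemma~\ref{lem:pure-hidden-model} together with $\sum_i p_i=1$, and use the linear independence of $\{\ketbra{\beta_i}{\beta_j}\}$ (the paper's Lemma~\ref{lem:independent}) to match coefficients and kill the off-diagonal terms, yielding separability. The only difference is cosmetic: the paper tests the family of rotated measurements $U$ and chooses, for each pair $b\neq d$, a two-level unitary with $U_{ba'}^{\ast}U_{da'}\neq 0$, whereas you test a single generic effect with all $a_i\neq 0$, which eliminates all cross terms at once.
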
 

\begin{lemma}
\label{lem:independent}
Let $\{\ket{\beta_i}\}_{i=1}^{n}$ be independent states of
some quantum system, then $\{E_{ij}=\ketbra{\beta_i}{\beta_j}\}_{i,j=1}^{n}$ are independent
operators.
\end{lemma}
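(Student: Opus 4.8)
The plan is to prove linear independence straight from the definition. I would suppose a vanishing linear combination
\begin{equation}
\sum_{i,j=1}^{n} c_{ij} E_{ij} = \sum_{i,j=1}^{n} c_{ij} \ketbra{\beta_i}{\beta_j} = 0
\end{equation}
holds as an operator identity, and aim to conclude that every coefficient $c_{kl}$ must vanish. The obvious first move is to sandwich this identity between $\bra{\beta_k}$ on the left and $\ket{\beta_l}$ on the right. Because the $\ket{\beta_i}$ are only assumed linearly independent and \emph{not} orthonormal, this produces $\sum_{i,j} c_{ij}\braket{\beta_k}{\beta_i}\braket{\beta_j}{\beta_l}=0$, a contraction of the coefficient matrix against the Gram matrix rather than a clean extraction of $c_{kl}$. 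So the substantive step is to manufacture the correct test vectors to sandwich with.

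The crux of the argument is the construction of a dual set, and this is exactly where the independence hypothesis enters. Since $\{\ket{\beta_i}\}_{i=1}^{n}$ are linearly independent, their Gram matrix $G_{ij}=\braket{\beta_i}{\beta_j}$ is Hermitian and positive definite, hence invertible. I would use $G^{-1}$ to define vectors $\ket{\phi_k}=\sum_{l}(G^{-1})_{lk}\ket{\beta_l}$ lying in $\Span(\{\ket{\beta_i}\}_{i=1}^{n})$, which satisfy the biorthogonality relation $\braket{\phi_k}{\beta_i}=\delta_{ki}$ by a short computation using $G G^{-1}=\II$. Equivalently, one can package the same fact as a matrix factorisation: writing $B$ for the operator whose columns are the $\ket{\beta_i}$, the combination above is $B C B^{\dagger}$ with $C=(c_{ij})$, and full column rank of $B$ guarantees a left inverse $L$ with $LB=\II_n$ (whose rows are precisely the $\bra{\phi_k}$). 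Either formulation reduces the whole lemma to the invertibility of $G$, i.e.\ to the hypothesis.

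With the dual set in hand the proof closes immediately and the remaining step is routine. Sandwiching the assumed relation between $\bra{\phi_k}$ and $\ket{\phi_l}$, and using $\braket{\phi_k}{\beta_i}=\delta_{ki}$ together with $\braket{\beta_j}{\phi_l}=\overline{\braket{\phi_l}{\beta_j}}=\delta_{lj}$, gives
\begin{equation}
0 = \sum_{i,j=1}^{n} c_{ij}\braket{\phi_k}{\beta_i}\braket{\beta_j}{\phi_l} = \sum_{i,j=1}^{n} c_{ij}\delta_{ki}\delta_{lj} = c_{kl}.
\end{equation}
Since $k$ and $l$ were arbitrary, all $n^2$ coefficients vanish, so the operators $\{E_{ij}\}_{i,j=1}^{n}$ are linearly independent. (In the matrix picture the identical conclusion follows by multiplying $BCB^{\dagger}=0$ on the left by $L$ and on the right by $L^{\dagger}$, yielding $C=0$.) I do not expect any genuine obstacle beyond the one already isolated: everything hinges on constructing the biorthogonal dual set, and that is precisely what linear independence supplies.
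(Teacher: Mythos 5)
Your proof is correct and takes essentially the same approach as the paper's: both arguments construct a biorthogonal dual set satisfying $\braket{\phi_k}{\beta_i}=\delta_{ki}$ and then sandwich the vanishing linear combination between dual vectors to extract each coefficient $c_{kl}$. The only (cosmetic) difference is that you build the dual vectors explicitly from the inverse Gram matrix, whereas the paper extends $\{\ket{\beta_i}\}_{i=1}^{n}$ to a basis and invokes the existence of its dual basis.
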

\begin{proof}
We can suppose that $\{\ket{\beta_i}\}_{i=1}^{n}$ is a basis of the
system; otherwise we can extend it to form a basis. Let
$\{\ket{\beta'_j}\}_{j=1}^{n}$ be the dual basis, defined by
$\braket{\beta'_i}{\beta_j}=\delta_{ij}$. Consider a linear
combination $Q=\sum_{i,j=1}^{n} \lambda_{ij}E_{ij}$, we observe that
$\bra{\beta'_i}Q\ket{\beta'_j}=\lambda_{ij}$. It follows that $Q=0$ if and only if
$\lambda_{ij}=0$ for all $i$ and $j$, or $\{E_{ij}\}_{i,j=1}^{n}$ are
independent.
\end{proof}

\begin{proof}[Proof of theorem \ref{pros:hidden-state-model}]
Suppose the joint state is unsteerable, we will show that it is
separable. Suppose Alice performs the non-degenerate projective measurement
$A_x=\{\ketbra{\alpha_{a|x}}{\alpha_{a|x}}\}_{a=1}^{d_A}$
locally. After the measurement, Alice's state is described by  
an ensemble $\{P(a|x),\ketbra{\alpha_{a|x}}{\alpha_{a|x}}\}_{a=1}^{d_A}$ and
correspondingly, Bob's system is steered to an ensemble of
not necessarily orthogonal pure states
$\{P(a|x),\ketbra{\beta_{a|x}}{\beta_{a|x}}\}_{a=1}^{d_A}$. Note that $\sum_{a} P(a|x)=1$.
It follows from our above discussion that the LHS ensemble must be $\{P(a|x),\ketbra{\beta_{a|x}}{\beta_{a|x}}\}_{a=1}^{d_A}$. Obviously, we have a
very important information.

Let us attach a system $C$ to purify the state $\rho$ of $AB$ to a pure state
$\ket{\Psi}$ of $ABC$. From Lemma~\ref{lem:projective_connected}, we know that the purified state must be of the form
\begin{equation}
\ket{\Psi}= \sum_{a=1}^{d_A} c_{a | x} \ket{\alpha_{a|x},\beta_{a|x},\gamma_{a|x}},
\label{eq:purified_form}
\end{equation}
with $\abs{c_{a|x}}^2=P(a|x)$. Then let Alice make another measurement $A_{x'}$ in the assemblage $A$ (here being all possible non-degenerate projective measurements), which is a collection of new rank-$1$ projections $\{A_{a'|x'}=\ketbra{\alpha_{a'|x'}}{\alpha_{a'|x'}}\}_{a'=1}^{d_A}$. The new rank-$1$ projections are
related to $\{\ketbra{\alpha_{a|x}}{\alpha_{a|x}}\}_{a=1}^{d_A}$ by a $d_A \times d_A $
unitary matrix $U$,
\begin{equation}
  \ket{\alpha_{a'|x'}} = \sum_{a=1}^{d_A}  \ket{\alpha_{a|x}} U_{aa'}.
\end{equation} 
Now the joint state can be expressed using  $\{\ket{\alpha_{a'|x'}}\}_{a'=1}^{d_A}$ as
\begin{equation}
\ket{\Psi}=\sum_{a=1}^{d_A}\sum_{a'=1}^{d_A} c_{a|x}U_{aa'}^{\ast} \ket{\alpha_{a'|x'},\beta_{a|x},\gamma_{a|x}}.
\end{equation}
If in the measurement $x'$, Alice gets the outcome $a'$, $BC$ is then steered to 
\begin{equation}
\sum_{a=1}^{d_A} c_{a|x} U^{\ast}_{aa'} \ket{\beta_{a|x},\gamma_{a|x}}.
\end{equation}
After tracing out the ancillary system $C$, one gets the corresponding conditional state of $B$,
\begin{equation}
A_{a'|x'}' = \sum_{b,d=1}^{d_A}  c_{b|x}c_{d|x}^* U_{ba'}^*U_{da'} \braket{\gamma_{d|x}}{\gamma_{b|x}} \ketbra{\beta_{b|x}}{\beta_{d|x}}. 
\label{eq:decomposition}
\end{equation}
Note that $\{\ket{\beta_{a|x}}\}_{a=1}^{d_A}$ are independent, thus so are
$\{\ketbra{\beta_{b|x}}{\beta_{d|x}}\}_{b,d=1}^{d_A}$ by
Lemma~\ref{lem:independent}. The coefficients in the decomposition~(\ref{eq:decomposition}) are therefore unique.

Since the $\{P(a|x),\ketbra{\beta_{a|x}}{\beta_{a|x}}\}_{a=1}^{d_A}$ is the LHS ensemble, the conditional states $A_{a'|x'}'$ must be expanded by rank-$1$ projections
$\{\ketbra{\beta_{a|x}}{\beta_{a|x}}\}_{b=1}^{d_A}$ for all unitary matrices $U$
and for all outcomes $a'$.  This means those operators
$\ketbra{\beta_{b|x}}{\beta_{d|x}}$ with $b \ne d$ must be absent from the
decomposition~(\ref{eq:decomposition}), i.e.,
\begin{equation}
 c_{b|x}c^{\ast}_{d|x}U_{ba'}^*U_{da'} \braket{\gamma_{d|x}}{\gamma_{b|x}} = 0. 
 \label{eq:off_diagonal_vanish}
\end{equation}
Now for a particular pair $b \ne d$, one can choose some $a'$ and a
unitary matrix $U$ such that $U_{ba'}^{\ast} U_{da'} \ne
0$. (For example, one can choose $a'=b$ and the matrix $U$ such
that $U_{bb}=U_{bd}=1/\sqrt{2}=U_{dd}=-U_{db}$ and {identity blocks} everywhere else.) This means that~\eqref{eq:off_diagonal_vanish} is satisfied only when $c_{b|x} c_{d|x}^{\ast}
\braket{\gamma_{d|x}}{\gamma_{b|x}}=0$ for all $b \ne d$. This makes the
state of $AB$, $\rho=\Tr_{C}
(\ketbra{\Psi}{\Psi})$, manifestly separable: $\rho=\sum_{b,d=1}^{d_A} c_{b|x}c_{d|x}^{\ast}
\braket{\gamma_{d|x}}{\gamma_{b|x}} \ketbra{\alpha_{b|x},\beta_{b|x}}{\alpha_{d|x},
\beta_{d|x}} = \sum_{b=1}^{d_A} \abs{c_{b|x}}^2 \ketbra{\alpha_{b|x},\beta_{b|x}}{\alpha_{b|x},
\beta_{b|x}}$.
\end{proof}
In fact, from this proof, a more detailed statement can be made.
\begin{corollary*}
Suppose by making a non-degenerate projective measurement consisting of orthonormal states $\{\ket{\alpha_i}\}_{i=1}^{d_A}$, Alice can steer Bob system to independent pure states $\{\ket{\beta_i}\}_{i=1}^{d_A}$, then either:
\begin{itemize}
\item[(i)] the joint state is steerable (thus nonseparable), in which case $\bra{\alpha_i,\beta_i} \rho \ket{\alpha_j,\beta_j}$ must not vanish for some $j \ne i$;
\item[(ii)] the joint state $\rho$ is separable of the form $\rho = \sum_{i=1}^{d_A} p_i \ketbra{\alpha_i,\beta_i}{\alpha_i,\beta_i}$.
\end{itemize}
\end{corollary*}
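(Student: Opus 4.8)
The plan is to treat the corollary as a bookkeeping refinement of the proof of Theorem~\ref{pros:hidden-state-model}: both alternatives can be read off directly from the purified decomposition already established there, once one translates the vanishing condition~\eqref{eq:off_diagonal_vanish} into a statement about matrix elements of $\rho$ in the product basis $\{\ket{\alpha_i,\beta_i}\}$. First I would fix the given non-degenerate projective measurement $\{\ketbra{\alpha_i}{\alpha_i}\}_{i=1}^{d_A}$ and invoke Lemma~\ref{lem:projective_connected} to write the purification as $\ket{\Psi}=\sum_{i=1}^{d_A} c_i \ket{\alpha_i,\beta_i,\gamma_i}$ with $\abs{c_i}^2=p_i$. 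Tracing out $C$ yields $\rho=\sum_{i,j} c_i c_j^{\ast} \braket{\gamma_j}{\gamma_i}\ketbra{\alpha_i,\beta_i}{\alpha_j,\beta_j}$, and since the $\ket{\beta_i}$ are independent, Lemma~\ref{lem:independent} guarantees that the operators $\ketbra{\alpha_i,\beta_i}{\alpha_j,\beta_j}$ are linearly independent, so this expansion is unique.

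The one genuinely new computation is short. Evaluating the matrix element in the statement and using orthonormality $\braket{\alpha_i}{\alpha_j}=\delta_{ij}$, only the $(i,j)$ term survives, so $\bra{\alpha_i,\beta_i}\rho\ket{\alpha_j,\beta_j}=c_i c_j^{\ast}\braket{\gamma_j}{\gamma_i}\braket{\beta_i}{\beta_i}\braket{\beta_j}{\beta_j}$. Because the $\ket{\beta_i}$ are genuine (nonzero) states, the two norm factors are strictly positive, so this off-diagonal element vanishes exactly when $c_i c_j^{\ast}\braket{\gamma_j}{\gamma_i}=0$, i.e.\ precisely the quantity that the proof of Theorem~\ref{pros:hidden-state-model} forces to vanish in Eq.~\eqref{eq:off_diagonal_vanish} under the unsteerability assumption.

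With this dictionary the dichotomy is immediate, using that separability and steerability are mutually exclusive (a separable state is always unsteerable). If $\rho$ is separable, then it is unsteerable, and the theorem's argument forces $c_i c_j^{\ast}\braket{\gamma_j}{\gamma_i}=0$ for all $i\neq j$; the off-diagonal terms then drop out of the expansion of $\rho$, leaving $\rho=\sum_i \abs{c_i}^2 \ketbra{\alpha_i,\beta_i}{\alpha_i,\beta_i}$, which is case (ii) with $p_i=\abs{c_i}^2$. Conversely, if $\rho$ is steerable it is nonseparable; were every $c_i c_j^{\ast}\braket{\gamma_j}{\gamma_i}$ to vanish we would recover the same diagonal separable form, a contradiction, so at least one off-diagonal coefficient is nonzero, and by the matrix-element identity $\bra{\alpha_i,\beta_i}\rho\ket{\alpha_j,\beta_j}\neq 0$ for some $j\neq i$, which is case (i). I do not anticipate a real obstacle: the entire analytic content is already carried by the theorem's proof that unsteerability kills the off-diagonal coefficients, and the only point needing care is verifying that the positive norm factors $\braket{\beta_i}{\beta_i}$ cannot accidentally reinstate or cancel a nonvanishing matrix element --- which they cannot, precisely because the $\ket{\beta_i}$ have nonzero norm.
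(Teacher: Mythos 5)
Your proposal is correct and takes essentially the same route as the paper: invoke Lemma~\ref{lem:projective_connected} to get $\ket{\Psi}=\sum_i c_i\ket{\alpha_i,\beta_i,\gamma_i}$, trace out $C$, identify $\bra{\alpha_i,\beta_i}\rho\ket{\alpha_j,\beta_j}=c_i c_j^{\ast}\braket{\gamma_j}{\gamma_i}$, and reuse the implication from the proof of Theorem~\ref{pros:hidden-state-model} that unsteerability forces these off-diagonal coefficients to vanish. The only organizational difference is that you split cases on separable versus steerable --- which is exhaustive only by appeal to Theorem~\ref{pros:hidden-state-model} itself (legitimate here, since \emph{a priori} a state could be nonseparable yet unsteerable) --- whereas the paper splits on whether the off-diagonal matrix elements vanish, a trivially exhaustive dichotomy from which both alternatives are then derived.
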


\begin{proof}
Again applying Lemma~\ref{lem:projective_connected}, the assumption of the corollary implies that the purified state is of the form~\eqref{eq:purified_form}, 
\begin{equation}
\ket{\Psi}= \sum_{i=1}^{d_A} c_i \ket{\alpha_i,\beta_i,\gamma_i}.
\end{equation}
By tracing over the ancillary system, we find
\begin{equation}
\rho= \sum_{i=1,j=1}^{d_A} c_i c_j^{\ast} \braket{\gamma_j}{\gamma_i} \ketbra{\alpha_i,\beta_i}{\alpha_j,\beta_j}.
\end{equation}
Note that $\bra{\alpha_i,\beta_i} \rho \ket{\alpha_j,\beta_j} = c_i c_j^{\ast} \braket{\gamma_j}{\gamma_i}$. Then it is clear that if $\bra{\alpha_i,\beta_i} \rho \ket{\alpha_j,\beta_j} \ne 0$ for some $j \ne i$, the state is steerable (thus nonseparable) according to the above proof; else if $\bra{\alpha_i,\beta_i} \rho \ket{\alpha_j,\beta_j} = 0$ for all $j \ne i$, $\rho$ reduces to the separable form stated in (ii) with $p_i= \abs{c_i}^2$.
\end{proof}

When applied to a two-qubit system, the statement is particularly
simple. For a two-qubit system, if by a non-degenerate projective measurement Alice
can steer $B$ to two pure states, then the joint state of $AB$ is either
separable or steerable. This result has been recently obtained by Chen~\textit{et al.}~\cite{chen_all-versus-nothing_2013}. Experimental applications of this special case were also discussed~\cite{sun_experimental_2014}. 
Here we report a more general result with a somewhat more systematic proof.

\begin{example}
Take again the two-qubit state~\eqref{eq:ex_2qbit} in example~\ref{ex:2qubit}, which satisfies the assumption of the above corollary. Note that $\bra{0,\beta_1} \rho  \ket{1,\beta_2} = z \sqrt{\eta (1-\eta)}$. We then know that it is steerable (or equivalently, nonseparable) if and only if $\abs{z}^2 \eta(1-\eta) > 0$. 
\end{example}

\subsection*{Conclusion}
We have shown that the pure steered states in the EPR experiments
carry interesting information about the shared state, in particular
its nonlocal {properties}. Thanks to the purification technique, results
are obtained in an {elementary} and systematic way. Although our work only
concentrates on the pure steered states, we speculate that
analysing the behaviour of the purified system in the EPR experiments
might be an interesting approach to quantum nonlocality. Applications
of the special case of Theorem~\ref{pros:hidden-state-model} for
two-qubit systems have been
discussed~\cite{chen_all-versus-nothing_2013,sun_experimental_2014}. As
experimental research is moving toward higher dimensional systems, we
hope that our generalisation will be useful for the future
experiments.
\begin{acknowledgments}
We thank David Gross, Eric Lutz, Roope Uola and Huangjun Zhu for useful discussions. The authors also acknowledge the Referee of PRA for the useful comments and particularly for correcting our terminology. 
\end{acknowledgments}

\bibliography{pure-state-steering}
\end{document}